\newlist{compactitem}{itemize}{2}
\setlist[compactitem]{label=\textbullet,noitemsep,leftmargin=*}
\setlist*[compactitem,2]{label=\textendash,nosep}
\theoremstyle{plain}
\newtheorem{theorem}{Theorem}
\newtheorem{corollary}[theorem]{Corollary}
\theoremstyle{definition}
\newtheorem*{definition}{Definition}
\DeclareMathOperator{\bigO}{\mathnormal{O}}
\DeclareMathOperator{\poly}{poly}
\newcommand{\algorithmicand}{\textbf{and}}
\newcommand{\algorithmicor}{\textbf{or}}
\newcommand{\bbN}{\mathbb{N}}
\newcommand{\binary}{{\{\bits0, \bits1\}}}
\newcommand{\bits}{\texttt}
\newcommand{\cl}[1]{\textnormal{\textbf{#1}}} 
\newcommand{\cladv}[2]{{\textnormal{\textbf{#1}}\!_\textnormal{\textbf{/#2}}}}
\newcommand{\length}[1]{{\lvert#1\rvert}}
\newcommand{\pr}[1]{\textnormal{\textsc{#1}}} 
\newcommand{\size}[1]{{\lvert#1\rvert}}
\newcommand{\suchthat}[1][]{\;#1|\;} 
\title{A Hierarchy of Polynomial Kernels\footnote{\href{https://doi.org/10.1007/978-3-030-10801-4_39}{doi:10.1007/978-3-030-10801-4\_39}}}
\author{
       Jouke Witteveen
  \and Ralph Bottesch\thanks{
This author was supported by the ERC Consolidator Grant QPROGRESS~615307 for the majority of the project's duration, and by the Austrian Science Fund~(FWF) project~Y757 at the time of publication.
  }
  \and Leen Torenvliet
}
\begin{document}

\maketitle

\begin{abstract}
  In parameterized algorithmics, the process of \emph{kernelization} is defined as a polynomial time algorithm that transforms the instance of a given problem to an equivalent instance of a size that is limited by a function of the parameter.
  As, afterwards, this smaller instance can then be solved to find an answer to the original question, kernelization is often presented as a form of \emph{preprocessing}.
  A natural generalization of kernelization is the process that allows for a number of smaller instances to be produced to provide an answer to the original problem, possibly also using negation.
  This generalization is called \emph{Turing} kernelization.
  Immediately, questions of equivalence occur or, when is one form possible and not the other.
  These have been long standing open problems in parameterized complexity.
  In the present paper, we answer many of these.
  In particular, we show that Turing kernelizations differ not only from regular kernelization, but also from intermediate forms as truth-table kernelizations.
  We achieve absolute results by diagonalizations and also results on natural problems depending on widely accepted complexity theoretic assumptions.
  In particular, we improve on known lower bounds for the kernel size of compositional problems using these assumptions.
\end{abstract}

\section{Introduction}

\paragraph*{Fixed-parameter tractability.}
For many important computational problems, the best known algorithms have a worst-case running time that scales exponentially or worse with the size of the input.
Generally however, the size of an input instance is a poor indicator of whether the instance is indeed difficult to solve.
This is because for most natural problems, a good fraction of all instances of a given size can be solved much more efficiently than the worst-case instance of that size.
To gain a better understanding of the \emph{complexity of individual instances}, we might define a function $\kappa:\binary^\ast\rightarrow \bbN$ that assigns to each instance $x$ a numeric \emph{parameter} $\kappa(x)$.
This parameter then indicates the extent to which certain features that we have identified as a potential cause of computational hardness are present in the given instance.
If the function $\kappa$ is itself polynomial-time computable, we call it a \emph{parameterization}.
We shall assume that $\kappa(x) \leq \length{x}$ holds for all $x \in \binary^\ast$.

Consider a problem for which the fastest known algorithm has a worst-case running time in $2^{\bigO(\length{x})}$.
If, for some parameterization $\kappa$, we can give an algorithm of which the worst-case running time on any instance $x$ is in $2^{\bigO(\kappa(x))}\poly(\length{x})$ and, furthermore, we have that $\kappa(x) \ll \length{x}$ holds for at least \emph{some} arbitrarily large instances, then we can argue that $\kappa$ is a more accurate measure of the complexity of instances than is their size, since the running time of the second algorithm is exponential \emph{only} in the parameter value.
Note that this implies that interesting parameterizations cannot be monotonic functions.
More generally, for $X \subseteq \binary^\ast$ and a parameterization $\kappa$, a \emph{parameterized problem} $(X, \kappa)$ is said to be \emph{fixed-parameter tractable (fpt)} if, for some computable function $f$ and constant $c \geq 0$, there is an algorithm solving any instance $x$ of $X$ in time $f(\kappa(x))\length{x}^c$.\footnote{
  From here onward, we may write $k$ for $\kappa(x)$ when there is no risk of confusion.
  Also, $n$ stands for $\length{x}$ when specifying the complexity of an algorithm.
}
The essential feature of such running times is that the parameter value and instance size appear only in separate factors.

\paragraph*{Kernelization.}
An important notion in the study of fixed-parameter tractability is that of \emph{kernelization}.
Informally, a kernelization (or kernel) for a parameterized problem is a polynomial-time algorithm that, for any input instance, outputs an equivalent instance of which the size is upper-bounded by a function of the parameter.
This type of algorithm is usually presented as a formalization of preprocessing in the parameterized setting.
It reduces any instance with large size but small parameter value to an equivalent smaller instance, after which some other algorithm (possibly one with large complexity) is used to solve the reduced instance.
Another explanation, which fits well with the idea of studying the complexity of individual instances, is that a kernelization extracts the \emph{hard core} of an instance.

Of particular interest is the case where the upper bound on the size of the output instance of a kernelization is itself a polynomial function in the parameter.
Such \emph{polynomial kernelizations} are important because they offer a quick way to obtain efficient fpt-algorithms for a problem.
If $X$ is solvable in exponential-time, then the existence of a polynomial kernelization for $(X, \kappa)$ means that the problem can be solved in time $2^{\poly(k)}\poly(n)$, which roughly corresponds to what we might reasonably consider to be useful in practice.
Conversely, for many parameterized problems that can be solved by algorithms with such running times (for example, \pr{$k$-Vertex-Cover}), it is also possible to show the existence of polynomial kernelizations.
However, there are also exceptions, such as the \pr{$k$-Path} problem, where an algorithm with time complexity $2^{\bigO(k)}\poly(n)$, but no polynomial kernelization, is known.
It was a long-standing open question whether the existence of polynomial kernelizations is equivalent to having fpt-algorithms with a particular kind of running time.
Eventually, \citet{bodlaender2009problems} showed that for many fixed-parameter tractable problems (including \pr{$k$-Path}), the existence of polynomial kernels would imply the unlikely complexity-theoretic inclusion $\cl{NP} \subseteq \cladv{coNP}{poly}$.
This framework for proving conditional lower bounds against polynomial kernels was subsequently considerably extended and strengthened \citep{bodlaender2014kernelization,drucker2015new} (see also the survey of \citet{kratsch2014recent}).
In the same paper, \citeauthor{bodlaender2009problems} also unconditionally prove the existence of a parameterized problem that is solvable in time $\bigO(2^k n)$, but has no polynomial kernels, thus ruling out the possibility of an equivalence between polynomial kernels and fpt-algorithms with running times of the form $2^{\poly(k)}\poly(n)$.

\paragraph*{Generalized kernelization.}
A \emph{Turing kernelization} is an algorithm that can solve any instance of a parameterized problem in polynomial-time, provided it can query an oracle for the same problem  with instances of which the size is upper-bounded by a function of the parameter value of the input.
The idea here is that if we are willing to run an inefficient algorithm on an instance of size bounded in terms of the parameter alone (as was the case with \emph{regular} kernelizations), then we might as well run this algorithm on more than one such instance.
A regular kernelization can be regarded as a particular, restricted type of Turing kernelization that a) runs the polynomial kernelization algorithm on the input, b) queries the oracle for the resulting output instance, and c) outputs the oracle's answer.
As in the case of regular kernelizations, a \emph{polynomial Turing kernelization} is such that the bound on the size of the query instances is itself a polynomial function.

Polynomial Turing kernelizations are not as well-understood as regular kernels.
The methods for proving lower bounds against the size of regular kernels do not seem to apply to them.
Indeed, there are problems that most likely have no polynomial kernels, but which \emph{do} admit a polynomial Turing kernelization.
An example being \mbox{\pr{$k$-Leaf-Subtree}} (called \pr{Max-Leaf-Subtree} in \citep{cygan2016parameterized}).
Furthermore, there are only a few examples of non-trivial polynomial Turing kernelizations for problems that are not believed to admit polynomial regular kernelizations, such as restricted versions of \pr{$k$-Path} \citep{jansen2016turing,jansen2018turing} and of \pr{$k$-Independent Set} \citep{thomasse2017polynomial}.
Whether the general versions of these problems also have polynomial Turing kernels are major open questions in this field.

Compared to the regular kind, polynomial Turing kernelizations have a number of computational advantages, such as the ability to output the opposite of the oracle's answer to a query (non-monotonicity), the ability to make polynomially (in the size of the input) many queries, and the ability to adapt query instances based on answers to previous queries (adaptiveness).
Rather than focus on specific computational problems to determine the difference in strength between Turing and regular kernelizations, we instead look into the possibility of unconditionally separating the computational strengths of these two types of algorithms in general.
We investigate and answer a number of questions that, to our knowledge, were all open until now:
\begin{itemize}
\item
  Without relying on any complexity-theoretic assumptions, can we prove the existence of parameterized problems that admit polynomial Turing but not polynomial regular kernelizations?
  If so, which of the computational advantages of Turing kernelizations are sufficient for an unconditional separation?

  Note that for \mbox{\pr{$k$-Leaf-Subtree}}, only a larger number of queries is used, the known polynomial Turing kernel being both monotone and non-adaptive \citep[see][Section~9.4]{cygan2016parameterized}.
  On the other hand, the kernels in \citep{jansen2016turing} and \citep{thomasse2017polynomial} are adaptive.
\item
  Does every parameterized problem that is decidable in time $2^{\poly(k)}\poly(n)$, also admit a polynomial Turing kernelization?
\item
  To what extent can we relax the restrictions on regular kernelizations (viewed as Turing kernelizations), while still being able to apply known lower bound techniques?
  For example, can we rule out, for some natural problems, the existence of non-monotone kernels that make a few adaptive oracle queries?
\end{itemize}

\subsection{Overview of our results}
\begin{wrapfigure}[20]{r}{0pt}
  \begin{tikzpicture}
    \graph[grow up=1.363,nodes={anchor=center,align=center,inner xsep=1.4em},edges={double equal sign distance,-implies}]{
      ker/"polynomial kernels" --
      cTuring/"polynomial Turing kernels with\\a constant number of queries" --
      psize/"psize kernels" --
      tt/"polynomial\\truth-table kernels" --
      T/"polynomial\\Turing kernels" --
      super/"fixed-parameter tractable";
    };

    \coordinate (bottom left) at (current bounding box.south west);
    \path (current bounding box.north east) ++(0pt, -16pt) coordinate (top right);
    \pgfresetboundingbox;
    \useasboundingbox (bottom left) rectangle (top right);
  \end{tikzpicture}
  \caption{
    A hierarchy of polynomial kernels.
    Arrows signify a strict increase in computational power.
  }
  \label{fig:hierarchy}
\end{wrapfigure}
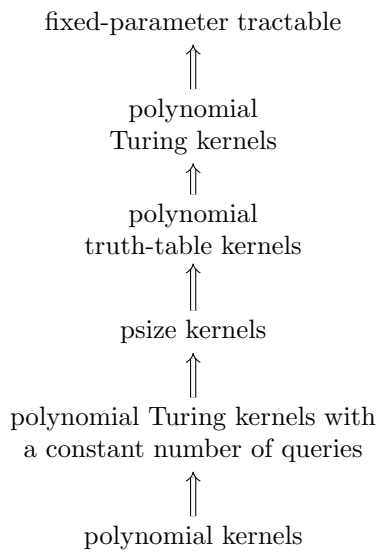

We show that each of the advantages of polynomial Turing kernelizations over polynomial regular kernelizations is, by itself, enough to unconditionally separate the two notions.
This produces a hierarchy of kernelizability within the class of problems that admit polynomial Turing kernelizations, Figure~\ref{fig:hierarchy}.
Specifically, we show that:
\begin{itemize}
\item there are problems that are not polynomially kernelizable, but do admit a polynomial Turing kernelization that makes a single oracle query (Theorem~\ref{thm:h1});
\item there are problems that admit non-adaptive polynomial Turing kernelizations (also known as polynomial \emph{truth-table} kernelizations), but cannot be solved by polynomial Turing kernelizations making a constant number of queries, even adaptively (Theorem~\ref{thm:htt} \& \ref{thm:hpsize});
\item there are problems that admit adaptive polynomial Turing kernelizations but not polynomial truth-table kernelizations (Theorem~\ref{thm:ht}).
\end{itemize}

Next, we show (Theorem~\ref{thm:htop}) that it is not enough for a problem to be decidable in time $2^{\poly(k)}\poly(n)$ in order for it to have a polynomial Turing kernelization.
In fact, the problem we construct can be solved in time $\bigO(2^k n)$.
Our theorem is stronger than a comparable result of \citeauthor{bodlaender2009problems}, who only exclude regular kernelizations.
We obtain a considerably simpler proof, harnessing the Time Hierarchy Theorem in favor of a direct diagonalization.

Finally, we ask how far up the hierarchy the known methods for proving lower bounds against polynomial kernelization can be applied.
The example of \mbox{\pr{$k$-Leaf-Subtree}} shows that they should already fail somewhere below polynomial truth-table kernelizations.
Indeed, we identify what we call \emph{psize kernelizations} as the apparently strongest type of polynomial Turing kernel that can be ruled out by current lower bound techniques (Section~\ref{sec:lowerbounds}).
A psize kernelization makes $\poly(k)$ non-adaptive oracle queries (of size $\poly(k)$), and then feeds the oracle's answers into a poly-sized circuit to compute its own final answer.
In terms of computational power, this type of kernelization stands between polynomial Turing kernelizations that make only a constant number of queries and polynomial truth-table kernelizations (Section~\ref{sec:separations}, Theorem~\ref{thm:htt} \& \ref{thm:hpsize}).

\subsection{Proof techniques}
The price we pay for being able to prove unconditional separations is that the problems we construct in the proofs are artificial rather than natural.
This is unavoidable, however, because computational problems that arise naturally will typically belong to classes that are hard to separate from $\cl{P}$ (such as $\cl{NP}$, $\cl{PH}$, $\cl{PP}$, etc.).
Thus, any claim that some parameterized version of a natural problem admits no polynomial kernelization, would currently have to rely on some complexity-theoretic assumptions.

In the construction of every problem witnessing a separation, diagonalization will be involved, in one way or another.
However, the application of diagonalization arguments in this context has some subtle issues.
An intuitive reason for this is the fact that it is very difficult to control the complexity of a problem that is constructed via an argument using diagonalization against polynomial-time machines.
Without additional complexity-theoretic assumptions, such problems can be forced to reside in powerful classes such as $\cl{EXP}$.
Positioning them in any interesting smaller classes is not straightforward.
By contrast, the difference between $\cl{P}$ and the class of problems that can be decided in polynomial-time with a very restricted form of access to an oracle, seems rather thin, and it is by no means clear whether a problem that is constructed via diagonalization can be placed between these two classes.
In Section~\ref{sec:separations} we discuss these issues, as well as how to overcome them, in detail.
Here, let us mention that the overall structure of our artificial problems resembles that of examples of natural problems which, subject to complexity-theoretic assumptions, admit polynomial Turing but not regular kernelizations.
Because of this, even the artificial examples we construct provide new insights into the power of Turing kernelization.

\section{Preliminaries}
We assume familiarity with standard notations and the basics of parameterized complexity theory, and refer the reader to \citep{flum2006parameterized} for the necessary background.
Here we review only the definitions of the notions most important for our work.

\begin{definition}
  A \emph{kernelization} (or \emph{kernel}) for a parameterized problem $(X, \kappa)$, where $X$ is a subset of $\binary^\ast$ and $\kappa$ is a parameterization, is a polynomial-time algorithm that, on a given input $x \in \binary^\ast$, outputs an instance $x' \in \binary^\ast$ such that $x \in X \Leftrightarrow x' \in X$ holds, and, for some fixed computable function $f$, we have $\length{x'} \leq f(\kappa(x))$.
  The function $f$ is referred to as the size of the kernel.
  The kernel is said to be \emph{polynomial} if $f$ is a polynomial.
\end{definition}

\begin{definition}
  A \emph{Turing kernelization} for a parameterized problem $(X, \kappa)$ is a polynomial-time algorithm that decides any instance $x$ of $X$ using oracle queries to $X$ of restricted size.
  For some fixed computable function $f$ that is independent of the input, the size of the queries must be upper bounded by $f(\kappa(x))$.
  A Turing kernelization is \emph{polynomial} if $f$ is a polynomial.

  A Turing kernelization is a \emph{truth-table kernelization} if, on every input, all of its oracle queries are independent of the oracle's answers.
  Thus, as an oracle machine, a truth-table kernelization is non-adaptive.
\end{definition}

A parameterized problem that exemplifies the relevance of our results is \mbox{\pr{$k$-Leaf-Subtree}}, where a graph $G$ and integer $k$ are given, and the question is whether $G$ has a subtree with at least $k$ leaves.
This problem admits a polynomial Turing kernelization but no polynomial regular kernelization, unless $\cl{NP} \subseteq \cladv{coNP}{poly}$.
See Section~9.4 of \citep{cygan2016parameterized} for a proof of the former, and Chapter~15 of the same reference for a proof of the latter fact.

\section{Separations}
\label{sec:separations}
To prove an unconditional separation between polynomial Turing kernelizability and polynomial regular kernelizability (or between two intermediate kinds of kernelizability), we construct a problem of which the instances can be solved in polynomial-time with oracle queries for small instances of the same problem.
We shall make sure that the instances cannot be solved in polynomial-time without such queries (remember, polynomial kernelizations are also poly-time decision procedures).
These requirements prevent us from constructing the classical part of our parameterized problem via simple diagonalization against polynomial-time machines.
The instances of the resulting language would not depend on each other in a way that would allow oracle queries to be useful, nor would all instances be solvable in time $p(n)$ for some \emph{fixed} polynomial $p$.
Solving an instance of such a language requires simulating Turing machines (\emph{TM}s) for a polynomial number of steps, but the degree of these polynomials increases with $n$.
Thus, a hypothetical polynomial Turing kernelization would neither be able to solve the instances of such a language directly within the allowed time, nor use its oracle access to speed up the computation.
An additional difficulty arises due to the bound on the size of the oracle queries (polynomial in $k$).
If the parameter value of an instance $x$ is too small relative to $\length{x}$, then the restricted oracle access of a  polynomial Turing kernelization may offer no computational advantage, since the instances for which the oracle can be queried will be small enough to be solved directly within the required time bound.

These issues can be overcome by designing a problem that shares what seems to be the essential feature of natural problems that, under complexity-theoretic assumptions, admit polynomial Turing but not polynomial (regular) kernelizations, such as the \mbox{\pr{$k$-Leaf-Subtree}} problem.
Recall that for this problem, a quadratic kernelization exists for the case when the input graph is connected, but that a polynomial kernelization for general graphs is unlikely to exist.
The known polynomial Turing kernelization for this problem works on general graphs by computing the kernel for each connected component of the input graph, and then querying the oracle for each of the $\bigO(n)$ resulting instances of size $\bigO(k^2)$ \citep[see][Section~9.4]{cygan2016parameterized}.
The crucial aspect here is that although the general problem may not admit polynomial kernelizations, it has a subproblem that does.
Furthermore, the polynomial Turing kernelization only queries instances of this subproblem.

The problems we construct will also have a polynomially kernelizable ``core,'' as well as a ``shell'' of instances that can be solved efficiently with small queries to the core.
Taking $V$ to be some decidable language, we can define
\begin{equation*}
  X(V) = \{\bits{0}x \suchthat x \in V\} \cup \left\{\bits{1}x \suchthat[\middle] \textrm{. . .}\right\},
\end{equation*}
where the ellipsis stands for a suitable condition that can be verified with small queries to $V$.
With the parameterization $\kappa$ such that $\kappa(\bits{0}x) = \length{x}$ and $\kappa(\bits{1}x) = \log\length{x}$ for all $x \in \binary^\ast$, the first set in the above disjoint union plays the role of the polynomially kernelizable core (it admits the trivial kernelization), while the second set plays the role of the shell.
The crucial observation now is that we can choose the condition that determines membership of an element of the form $\bits{1}x$ in $X(V)$ in such a way that a polynomial-time algorithm can decide the instance using small queries of the form $\bits{0}w$, \emph{regardless of the choice of $V$}.
Having thus secured the existence of a polynomial Turing kernelization (perhaps one that is further restricted), we are now free to construct $V$ via diagonalization against some weaker type of kernelization, so as to get the desired separation.

Using this approach, we prove that each of the computational advantages a polynomial Turing kernelization has over polynomial (regular) kernelizations, results in a strictly stronger type of kernelization, as shown in Figure~\ref{fig:hierarchy}.

\begin{theorem}
\label{thm:h1}
  There is a parameterized problem that has a polynomial Turing kernelization using only a single oracle query, but admits no polynomial kernelizations.
\end{theorem}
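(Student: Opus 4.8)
The plan is to instantiate the $X(V)$ template from the preceding discussion. I take $V \subseteq \binary^\ast$ to be a language constructed by diagonalization (details below), and set
\begin{equation*}
  X(V) = \{\bits{0}x \suchthat x \in V\} \cup \{\bits{1}x \suchthat x \in \binary^\ast \text{ and } \bits{0}x' \in X(V)\},
\end{equation*}
where $x'$ is a short canonical encoding of $x$ — concretely, I would let $\bits{1}x$ mean ``the string $x$, regarded as the binary encoding of a number $m$, belongs (in unary-length form) to $V$'': the membership of $\bits{1}x$ should be reducible to one query $\bits{0}w$ with $\length{w}$ polynomial in $\log\length{x}$. The parameterization is $\kappa(\bits{0}x) = \length{x}$ and $\kappa(\bits{1}x) = \lceil\log\length{x}\rceil$ (padded up to $\length{\bits{0}x}$ if needed so that $\kappa \le$ input length). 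The single-query Turing kernelization is then immediate and works for \emph{any} $V$: on input $\bits{0}x$ it queries $\bits{0}x$ itself (a query of size equal to the input, which is $\le f(\kappa) = f(n)$ trivially), and on input $\bits{1}x$ it computes $w$ in polynomial time, queries the single small instance $\bits{0}w$ of size $\poly(\log\length{x}) = \poly(\kappa)$, and returns that answer. So the positive half of the theorem holds by construction, regardless of how $V$ is built.

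For the negative half I must choose $V$ so that $(X(V),\kappa)$ has no polynomial kernelization. A polynomial kernelization is in particular a polynomial-time algorithm that, composed with any decision procedure for $X(V)$ on instances of bounded size, decides $X(V)$; but more usefully, I will argue that a polynomial kernel of size $f(k)=k^c$ applied to inputs of the form $\bits{0}x$ would let us decide $V$ too fast. The key point is the asymmetry in the parameter: on $\bits{0}x$ the parameter equals $\length{x}$, so a polynomial kernel may output an instance of size $\length{x}^c$ — it is allowed to do essentially nothing. So I cannot diagonalize against kernels acting on the core directly. Instead I exploit the \emph{shell}: on input $\bits{1}x$ the parameter is only $\log\length{x}$, so a polynomial kernel must shrink $\bits{1}x$ to size $(\log\length{x})^c$, and it runs in time $\poly(\length{x})$. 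Composing with the trivial decoding, such a kernel yields, for each polynomial-time machine $M$ (the candidate kernel), a polynomial-time procedure that decides membership ``$\bits{1}x \in X(V)$'', i.e. decides the predicate $x \mapsto [\bits{0}x' \in X(V)] = [x' \in V]$, where $x' = w$ has length $\Theta(\log\length{x})$. Reading this backwards: if such a kernel existed, then $V$ restricted to strings of length $\ell$ would be decidable in time $\poly(2^{\ell})$ — which is \emph{not} a contradiction by itself. The fix is to make $V$ hard enough: I diagonalize so that $V$ is not decidable in time $2^{o(\ell)}$ (or some concrete superpolynomial-in-$2^\ell$ bound), using a standard slow-diagonalization / Time Hierarchy argument on inputs of length $\ell$, while keeping $V$ decidable (in exponential time in $\ell$), so that $X(V)$ is a legitimate decidable parameterized problem.

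Let me restate the reduction cleanly, since this is the crux. Suppose $\mathcal{K}$ is a polynomial kernelization for $(X(V),\kappa)$ of size $k^c$ and running time $n^d$. Enumerate strings $z$ of length $\ell$. For each such $z$, form the padded input $y_z = \bits{1}(z \, \bits{0}^{2^{\ell^{2}}})$ — a string of length $N = 2^{\ell^2}+\ell+1$, so $\kappa(y_z) = \lceil \log N\rceil = \Theta(\ell^2)$, and by our definition $y_z \in X(V) \iff z \in V$ (the decoding recovers $z$ from the prefix before the padding). Run $\mathcal{K}$ on $y_z$: it halts in time $N^d = 2^{O(\ell^2 d)}$ and outputs an instance of size at most $\kappa(y_z)^c = \Theta(\ell^{2c})$, which we then solve by brute force over all strings of that length in time $2^{O(\ell^{2c})} \cdot (\text{cost of the membership test for short strings})$. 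Membership in $X(V)$ for a string of length $s = \Theta(\ell^{2c})$ reduces, in the worst case via the $\bits{1}$-clause, to deciding $V$ on a string of length $O(\log s) = O(\log \ell)$, which is tiny; so the whole thing runs in time $2^{O(\ell^{2c})} + 2^{O(\ell^2 d)} = 2^{\poly(\ell)}$ and decides whether $z\in V$. Hence $V \cap \binary^{=\ell}$ would be decidable in time $2^{p(\ell)}$ for a fixed polynomial $p$ depending only on $(c,d)$. Now I build $V$ by diagonalization to defeat this: interpret the binary string $z$ as encoding a pair $(i, \text{extra})$ where $i$ is the index of a TM; on the block of lengths $\ell$ reserved for machine $M_i$, define $V$ to disagree with $M_i$'s output (run for $2^{\ell^{\log \ell}}$ steps, say) on a diagonal string, so that no $M_i$ running in time $2^{\poly(\ell)}$ can decide $V\cap\binary^{=\ell}$ for all large $\ell$; yet $V$ is clearly decidable in time $2^{2^{\ell^{\log\ell}}}$, hence decidable, hence $X(V)$ is a genuine parameterized problem. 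This contradicts the existence of $\mathcal{K}$, proving $(X(V),\kappa)$ has no polynomial kernelization.

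The main obstacle is getting the quantifiers in the diagonalization exactly right: a \emph{single} language $V$ must simultaneously defeat \emph{every} candidate polynomial kernelization $\mathcal{K}$, and each such $\mathcal{K}$ gives a different exponent $p(\ell) = p_{c,d}(\ell)$ in the would-be fast algorithm for $V$. The standard remedy is to make $V$ not decidable in time $2^{\ell^t}$ for \emph{any} fixed $t$ — i.e. require, for each $t$, that on all sufficiently large lengths $\ell$ in some infinite set, $V\cap\binary^{=\ell}$ differs from the output of every $2^{\ell^t}$-time machine — which a careful dovetailed slow-diagonalization (essentially the Time Hierarchy Theorem applied with time bound $2^{\ell^{\omega(1)}}$) delivers, at the cost of $V$ landing in $\cl{EXP}$ or a bit above. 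I would also need to double-check the bookkeeping that $\kappa(x) \le \length{x}$ always holds after padding, and that the ``short-string membership in $X(V)$'' subroutine used in the brute-force step does not itself secretly need to decide $V$ on long strings — which it does not, because the recursion in the $\bits{1}$-clause strictly shrinks the relevant $V$-instance to logarithmic length. Everything else is routine.
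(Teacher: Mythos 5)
There is a fatal flaw in your construction, independent of how you build $V$. Your shell condition is \emph{positive} and uniformly reducible to a single short query: from $\bits{1}x$ you compute, in polynomial time, a string $w$ with $\length{w} = \poly(\log\length{x}) = \poly(\kappa(\bits{1}x))$ such that $\bits{1}x \in X(V) \Leftrightarrow \bits{0}w \in X(V)$. But then the map $\bits{1}x \mapsto \bits{0}w$ (together with the identity on core instances $\bits{0}x$, whose parameter already equals $\length{x}$) \emph{is} a polynomial --- indeed linear-size --- regular kernelization of $(X(V), \kappa)$, for \emph{every} decidable $V$. A monotone, non-adaptive, single-query reduction to an instance of size $\poly(k)$ is exactly a kernelization. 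So no choice of $V$ can make the negative half of the theorem hold for your $X(V)$. The paper's construction avoids this precisely by negating the shell condition ($\bits{1}x \in X(V)$ iff $\bits{0}^{\log\length{x}} \notin V$): the single-query Turing kernel must then \emph{flip} the oracle's answer, which a regular kernelization --- being monotone --- cannot simulate, and the diagonalization exploits exactly this (a candidate kernel that queries $\bits{0}\bits{0}^n$ is automatically wrong, and any other query is either too large or already fixed).

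Your argument for the negative half also has an independent gap. To decide $z \in V$ you run the kernel on $y_z$ and then decide its output instance $u$ with $\length{u} \leq \Theta(\ell^{2c})$; but if $u = \bits{0}w$ is a core instance, deciding it means deciding $w \in V$ for a string \emph{longer} than $z$, whose complexity is exactly what you are making astronomically large. The claimed $2^{\poly(\ell)}$ bound therefore does not hold, and the contradiction with the hardness of $V$ evaporates; the ``brute force over all strings of that length'' step does not decide membership of the specific output instance. The paper sidesteps this entirely: rather than fixing a hard $V$ first and deriving a complexity contradiction, it diagonalizes \emph{directly} against each candidate machine (viewed as making one $\poly(k)$-sized query and outputting the oracle's answer), answering the query consistently with the partial construction of $V$ and only then fixing the diagonal string $\bits{0}^n$ so that the machine's output is wrong. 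You would need to adopt both the negated shell condition and this direct, adaptive diagonalization for the proof to go through.
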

\begin{proof}
  Given any decidable set $V$, we can define
  \begin{equation*}
    X(V) = \{\bits{0}x \suchthat x \in V\} \cup \left\{\bits{1}x \suchthat[\middle] \log\length{x} \in \bbN\text{ and }\bits{0}^{\log\length{x}} \notin V\right\},
  \end{equation*}
  parameterized so that for all $x \in \binary^\ast$, $\kappa(\bits{0}x) = \length{x}$ and $\kappa(\bits{1}x) = \log\length{x}$.

  Clearly, the problem $(X(V), \kappa)$ has a polynomial Turing kernelization making a single query, regardless of the decidable set $V$.
For instances of the form $\bits{0}x$, the answer can be obtained by querying the oracle directly for the input, and if the input is $\bits{1}x$, one can query $\bits{0}^{\log\length{x} + 1}$ and output the opposite answer.

  We shall construct the set $V$ by diagonalization, ensuring that $X(V)$ does not admit a polynomial (regular) kernelization.
  Note that the kernelization procedures we diagonalize against can query $X(V)$, whereas we only decide the elements of $V$.
  Because every problem that admits a polynomial kernelization can also be decided by a polynomial-time TM that makes a single query of size $\poly(k)$ and then outputs the oracle's answer, we only need to diagonalize against this type of TM.
  As in a standard diagonalization argument, we run every such machine for an increasing number of steps, using as input the string $\bits{1}\bits{0}^{2^n}$ (the parameter value of which is $n$), where $n$ is chosen large enough for decisions made at previous stages to not interfere with the current simulation.
  Each machine is simulated until it runs out of time or makes an oracle query.
  Whenever the machine makes an oracle query different from $\bits{1}\bits{0}^{2^n}$, we answer it according to the current state of the set $V$.
  To complete the diagonalization, we either add $\bits{0}^n$ to $V$ or not, so as to ensure the machine's answer is incorrect.

  Note that for sufficiently large values of $n$, the string $\bits{1}\bits{0}^{2^n}$ cannot be queried, because $2^n$ outgrows any fixed polynomial in $n$ ($\in \poly(k)$).
  Additionally, a query to $\bits{0}\bits{0}^n$ is of no concern as the machine is incapable of negating the answer of the oracle.
\end{proof}

Next, we show that polynomial truth-table kernelizations, which can make $\poly(n)$ oracle queries of size $\poly(k)$ but cannot change their queries based on the oracle's previous answers, are more powerful than a restricted version of the same type of kernelization that makes at most $\poly(k)$ queries.
This restricted form of polynomial truth-table kernelization is of further interest because it can be ruled out by the current lower bounds techniques (see Section~\ref{sec:lowerbounds}).
We give the definition here.

\begin{definition}
\label{def:psize}
  A polynomial truth-table kernelization is a \emph{psize kernelization} if, on any input instance with parameter value $k$, it makes at most $\poly(k)$ oracle queries and its output can be expressed as the output of a $\poly(k)$-sized circuit that takes the answers of the oracle queries as input.
\end{definition}

The proof of the next theorem follows the same pattern as that of Theorem~\ref{thm:h1}, except that in the diagonalization part of the proof we now use the restriction on the number of queries the machines can make.
Recall that in Theorem~\ref{thm:h1} we made use of the machine's monotonicity, that is, the fact that its output must be equivalent to the outcome of its single oracle query.

\begin{theorem}
\label{thm:htt}
  There is a parameterized problem that has a polynomial truth-table kernelization but no psize kernelization.
\end{theorem}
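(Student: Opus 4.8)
The plan is to follow the recipe of Theorem~\ref{thm:h1}, but to replace its single‑query ``shell'' condition by one that still admits a non‑adaptive evaluation with $\poly(n)$ small queries while genuinely requiring more than $\poly(k)$ of them, so that the diagonalization can exploit the bound on the \emph{number} of queries rather than monotonicity. For a decidable set $V$ I would define
\[
  X(V) = \{\bits{0}x \suchthat x \in V\} \cup \left\{\bits{1}x \suchthat[\middle] \log\length{x} \in \bbN \text{ and } \size{V \cap \binary^{\log\length{x}}} \text{ is odd}\right\},
\]
with the same parameterization as before, $\kappa(\bits{0}x) = \length{x}$ and $\kappa(\bits{1}x) = \log\length{x}$.

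First I would check that, for every decidable $V$, the problem $(X(V),\kappa)$ admits a polynomial truth‑table kernelization: on $\bits{0}x$ it queries $\bits{0}x$ itself; on $\bits{1}x$ with $m = \log\length{x} \in \bbN$ it non‑adaptively queries $\bits{0}w$ for every $w \in \binary^m$ and returns the parity of the number of affirmative answers; on all other inputs it rejects. Each query has length at most $k+1$, there are at most $2^m \le n$ of them, and the parity is poly‑time computable, so this is a polynomial truth‑table kernelization no matter what $V$ is. Since it makes $2^k$ queries on shell instances it is not itself a psize kernelization --- as it had better not be.

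The core of the proof is then the diagonalization that builds $V$ so that $(X(V),\kappa)$ has no psize kernelization. By definition, a psize kernelization is in particular a polynomial‑time, non‑adaptive oracle machine that on an instance of parameter $k$ makes at most $\poly(k)$ queries of size at most $\poly(k)$, so it suffices to diagonalize against all appropriately clocked machines $M_1, M_2, \dots$ of this form. I maintain the invariant that $V$ contains only strings whose length is one of the diagonalization lengths $n_1 < n_2 < \dots$, to be fixed during the construction. At stage $j$ I run $M_j$ on $\bits{1}\bits{0}^{2^{n_j}}$ --- whose parameter is $n_j$ --- choosing $n_j$ so large that: (i)~$n_j$ exceeds every query‑size bound $p_{j'}(n_{j'})$ of the earlier stages, so that no earlier stage touched the slice $\binary^{n_j}$; (ii)~$M_j$'s query‑size bound $p_j(n_j)$ is below $2^{n_j}+1$, so $M_j$ can query neither its own input nor any $\bits{1}w$ with $\length{w} = 2^{n_j}$; and (iii)~$M_j$'s bound on the number of queries is below $2^{n_j} = \size{\binary^{n_j}}$. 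I answer $M_j$'s core queries $\bits{0}w$ by the $V$ committed so far, putting each not‑yet‑decided string outside $V$ (harmless, since by~(i) its length will not be a future diagonalization length); by~(iii) this leaves some $w^\ast \in \binary^{n_j}$ unqueried. I answer $M_j$'s shell queries $\bits{1}w$ by the induced membership: by~(ii) either $\length{w}$ is not a power of two (answer ``no''), or $\length{w} = 2^\ell$ with $\ell < n_j$, so $V \cap \binary^\ell$ is already determined (empty whenever $\ell \notin \{n_1,n_2,\dots\}$). With all of $M_j$'s answers fixed it outputs some bit; I then decide $V \cap \binary^{n_j}$ (putting $w^\ast$ in $V$ or not, and nothing else of this length in) so that the parity of $\size{V \cap \binary^{n_j}}$ --- equivalently, membership of $\bits{1}\bits{0}^{2^{n_j}}$ in $X(V)$ --- disagrees with that bit. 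The resulting $V$ is decidable, and no psize kernelization decides $(X(V),\kappa)$.

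The main obstacle I expect is the consistency bookkeeping: I must ensure that committing a whole slice $\binary^{n_j}$ at stage $j$ never contradicts an oracle answer required at any other stage. This is precisely what motivates the invariant that $V$ lives only on the lengths $n_j$, the requirement that each $n_j$ dwarf all earlier query‑size bounds (keeping the diagonalizing slice pristine), and the need to commit the entire slice at once --- so that a later machine that core‑queries a length‑$n_j$ string cannot unwittingly flip the parity we just arranged. Once all of this is in place, the quantitative heart of the argument is routine: by~(ii) no shell query reaches length $2^{n_j}$ and by~(iii) $M_j$ makes at most $\poly(n_j)$ core queries into a slice of size $2^{n_j}$, so the bit $w^\ast$ controlling the parity is invisible to $M_j$, and the diagonalization goes through.
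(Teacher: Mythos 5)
Your proposal is correct and follows essentially the same strategy as the paper's proof: a core/shell construction whose shell condition aggregates an entire slice of $V$ (you use the parity of $\size{V \cap \binary^{k}}$ where the paper uses nonemptiness of that intersection), decided by a non-adaptive kernel that queries the whole slice, together with a diagonalization that exploits the $\poly(k)$ bound on the number of queries to leave one slice element uncommitted and use it to flip the answer. Your consistency bookkeeping is somewhat more explicit than the paper's, but the argument is the same.
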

\begin{proof}
  Given any decidable set $V$, we can define
  \begin{equation*}
    X(V) = \{\bits{0}v \suchthat v \in V\} \cup \left\{\bits{1}x \suchthat[\middle] \log\length{x} \in \bbN\text{ and }\binary^{\log\length{x}} \cap V \neq \emptyset\right\},
  \end{equation*}
  parameterized so that for all $x \in \binary^\ast$, $\kappa(\bits{0}x) = \length{x}$ and $\kappa(\bits{1}x) = \log\length{x}$.
  Clearly, $(X(V), \kappa)$ has a polynomial truth-table kernelization regardless of $V$: on input $\bits{0}x$ it queries the oracle for the input, and on input $\bits{1}x$, with $\log\length{x} \in \bbN$, it queries the oracle with each string $\bits{0}y$, for all $y \in \binary^{\log\length{x}}$, and accepts if one of the queries has a positive answer (otherwise it rejects).
  This procedure runs in polynomial time and makes at most $n$ oracle queries on any input of length $n+1$.

  We construct $V$ by diagonalizing against psize kernelization algorithms.
  To do this, we consider a computable list of TMs such that every machine appears infinitely often.
  At stage $i$ of the construction we choose $n$, a power of $2$, so that membership in $V$ has not been decided at a previous stage for any strings of length at least $\log n$.
  We then run the $i$-th machine on input $\bits{1}\bits{0}^n$ for $n^i$ steps.
  All new oracle queries are answered with `no', all other queries are answered so as to be consistent with previous answers.
  If the machine at stage $i$ terminates without having queried the oracle for all strings of the form $\bits{0}y$ with $y \in \binary^{\log n}$, we add an unqueried string of this length to $V$ if and only if the machine rejects.

  If $P$ is a psize kernelization, then the number of oracle queries it makes on an input $\bits{1}x$ is upper-bounded by $q(\log\length{x})$, for some fixed polynomial $q$.
  This is clearly $o(\length{x})$, so for some sufficiently large $i$ and $n$, $P$ will terminate without having queried all $n$ strings which can determine the correct answer.
  Thus, our diagonalization procedure will ensure that it terminates with the incorrect answer.
  On the other hand, the above-mentioned polynomial truth-table kernelization will always query all necessary strings in order to output the correct answer.
\end{proof}

Note that the conclusion of the above proof is actually that there exists a parameterized problem with a polynomial truth-table kernelization making $n-1$ oracle queries, that admits no polynomial (possibly adaptive!) Turing kernelization making fewer than $n-2$ queries on certain inputs of length $n$.
A psize kernel fits this condition, but is much more restricted (in particular, the number of allowed queries is polynomial in the parameter value).

Via a very similar proof, with a diagonalization argument relying on the number of oracle queries a machine can make, we can show that psize kernelizations are stronger than polynomial Turing kernelizations making any fixed finite number of queries, even adaptively.

\begin{theorem}
\label{thm:hpsize}
  There is a parameterized problem that has a psize kernelization but no polynomial Turing kernelization making only a constant number of (possibly adaptive) queries.
\end{theorem}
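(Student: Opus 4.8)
The plan is to follow the template of Theorems~\ref{thm:h1} and~\ref{thm:htt}: take a trivially kernelizable core $\{\bits{0}v \suchthat v \in V\}$ together with a shell $\{\bits{1}x \suchthat[\middle] \dots\}$, where the shell condition can be checked by a psize kernelization for \emph{every} decidable $V$, yet depends on more bits of $V$ than a constant number of oracle queries can reveal. Concretely, I would set
\[
  X(V) = \{\bits{0}v \suchthat v \in V\} \cup \left\{\bits{1}x \suchthat[\middle] \log\log\length{x} \in \bbN\text{ and }\binary^{\log\log\length{x}} \cap V \neq \emptyset\right\},
\]
parameterized by $\kappa(\bits{0}x) = \length{x}$ and $\kappa(\bits{1}x) = \log\length{x}$; the only change from Theorem~\ref{thm:htt} is that the decisive strings now have length $\log\log\length{x}$ rather than $\log\length{x}$. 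For an instance $\bits{1}x$ with $k = \log\length{x}$ there are exactly $2^{\log\log\length{x}} = \log\length{x} = k$ strings of the relevant length, each of length $\bigO(\log k)$. Hence, after checking $\log\log\length{x} \in \bbN$, a kernelization may query all of them — that is $\bigO(k)$ queries of size $\poly(k)$ — and combine the answers with an OR, a circuit of size $\bigO(k)$; together with the single direct query used for instances $\bits{0}x$, this is a psize kernelization irrespective of $V$.

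For the lower bound I would construct $V$ by a staged diagonalization against all polynomial Turing kernelizations making at most a constant number of (possibly adaptive) oracle queries, now exploiting the bound on the \emph{number} of queries where the proof of Theorem~\ref{thm:htt} used the time bound. Enumerate Turing machines so that each recurs infinitely often. At stage $i$ put $n_i = 2^{2^i}$ (so $\log\log n_i = i$), simulate the $i$-th machine on $\bits{1}\bits{0}^{n_i}$ for $n_i^i$ steps, answering every new oracle query with ``no'' and repeated queries consistently; if the machine halts without having queried every string $\bits{0}y$ with $y \in \binary^{i}$, put one unqueried such string into $V$ precisely when the machine rejects. Each stage thus decides membership for all strings of length $i$ — the only strings ever \emph{added} to $V$ being these ``free'' strings, one per stage, at the stage's own length — while queried strings of length $>i$ are recorded so that later stages avoid them when choosing their free string.

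The reason a constant number of queries cannot help, not even adaptively, is that the value $\binary^{\log\log\length{x}} \cap V \neq \emptyset$ occurs in $X(V)$ in only two forms: spread over the $k$ core instances $\bits{0}y$ with $\length{y} = \log\log\length{x}$, of which a bounded-query kernelization inspects at most $c$; or, indirectly, in an instance $\bits{1}z$ with $\log\log\length{z} = \log\log\length{x}$, that is, with $\length{z} = \length{x}$ — but such a query has size larger than any $\poly(k)$ and so is illegal. Hence on $\bits{1}\bits{0}^{n_i}$ a legal kernelization learns nothing about the $\ge 2^{i}-c$ length-$i$ core strings it did not query, and once $i$ is large enough that it halts within $n_i^i$ steps, that $c < 2^{i}$, and that its queries are too short to reach any length-$n_i$ instance, the diagonalization step flips its output while leaving every oracle answer it received intact — so it fails to decide $(X(V),\kappa)$. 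The part I expect to need the most care is the cross-stage bookkeeping: a machine run for $n_i^i$ steps may query core strings of length greater than $i$, not yet ``owned'' by any stage, and these negative answers must never be contradicted later; since each stage commits only finitely many such strings while each length $i$ holds $2^{i}$ candidates, an untouched string for the free choice always remains, so $V$ — hence $X(V)$ — is decidable. Spelling this out, with the routine check that the simulation faithfully mirrors the true oracle on legal machines, is where the proof does its real work.
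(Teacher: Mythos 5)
Your construction and upper bound are exactly what the paper intends here: the paper only sketches this proof (``very similar'' to Theorem~\ref{thm:htt}, with the diagonalization now exploiting the bound on the \emph{number} of queries rather than the time bound), and shrinking the decisive length from $\log\length{x}$ to $\log\log\length{x}$ is the right move, since it brings the number of decisive strings down to $2^{\log\log\length{x}} = k$, so that querying all of them and OR-ing the answers is a psize kernelization for every $V$. Your query-counting argument is also sound, including the key observation that the only other instances carrying the information $\binary^{\log\log\length{x}} \cap V \neq \emptyset$ are the strings $\bits{1}z$ with $\length{z} = \length{x}$, which exceed every $\poly(k)$ bound and are therefore illegal queries.

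The genuine gap is in the cross-stage bookkeeping --- exactly the place you flagged --- and the counting argument you offer to dismiss it is wrong. With the rigid schedule $n_i = 2^{2^i}$, stage $i$ must find its free string among only $2^i$ candidates of length $i$, but stage $i-1$ already runs its machine for $n_{i-1}^{i-1} = 2^{(i-1)2^{i-1}}$ steps, which is ample time to query \emph{every} string $\bits{0}y$ with $\length{y} = i$ and have them all committed to ``no''; the enumeration contains machines that do precisely this, and each recurs infinitely often, so every length can be exhausted before its own stage arrives. ``Each stage commits only finitely many strings'' does not save you when the pool at length $i$ has only $2^i$ elements and earlier stages have superexponentially many steps. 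Worse, a query $\bits{1}z$ with $\log\log\length{z} = j > i$ (feasible for $j$ up to about $i + \log i$) answered ``no'' at stage $i$ commits the \emph{entire} level $\binary^j$ to lie outside $V$, forbidding any flip at stage $j$. The repair is the one the paper itself uses in Theorems~\ref{thm:htt} and~\ref{thm:ht}: decouple the stage index from the input length, and at stage $i$ run the $i$-th machine on $\bits{1}\bits{0}^{2^{2^j}}$ for the least $j$ such that no string of length at least $j$, and no level at least $j$ via a shell query, has been touched at any earlier stage. Since each stage touches only finitely many strings and levels, such a $j$ always exists, and with that change the rest of your argument goes through.
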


We can also show that adaptive queries provide a concrete computational advantage.
The proof of the separation between general polynomial Turing and truth-table kernelizations also follows the pattern of the previous three theorems, but with a more involved diagonalization argument, due to the need to distinguish between adaptive and non-adaptive oracle TMs.

\begin{theorem}
\label{thm:ht}
  There is a parameterized problem that has a polynomial Turing kernelization but no polynomial truth-table kernelization.
\end{theorem}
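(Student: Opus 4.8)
The plan is to follow the template of Theorems~\ref{thm:h1}--\ref{thm:hpsize}: take a core $\{\bits{0}v \suchthat v \in V\}$ together with a shell of instances that can be decided by small queries to the core, and build $V$ by diagonalization. The new ingredient is that the shell condition should encode an \emph{inherently adaptive} computation. Concretely, for a decidable set $V$ I would put
\[
  X(V) = \{\bits{0}v \suchthat v \in V\} \cup \left\{\bits{1}x \suchthat[\middle] \log\length{x} \in \bbN \text{ and } \ell_V(x) \in V \right\},
\]
parameterized by $\kappa(\bits{0}x) = \length{x}$ and $\kappa(\bits{1}x) = \log\length{x}$. Here, writing $m = \log\length{x}$, one fixes (computably from $m$) a complete binary tree of depth $m^2$ whose $2^{m^2+1}-1$ nodes carry distinct labels in $\binary^{m^3}$, and $\ell_V(x)$ denotes the leaf reached by starting at the root and descending, at an internal node labelled $v$, to the left child if $v \in V$ and to the right child otherwise.

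\emph{Positive direction.} For every $V$, the problem $(X(V),\kappa)$ has an adaptive polynomial Turing kernelization: on input $\bits{0}x$ it queries $\bits{0}x$ and returns the answer; on input $\bits{1}x$ with $\log\length{x} = m$, it simulates the tree walk, using the answers to the queries $\bits{0}v$ (which reveal whether $v \in V$) to decide at each of the $m^2$ internal nodes which way to descend, and finally queries $\bits{0}\ell_V(x)$ and outputs that answer. This makes $m^2+1 = \poly(k)$ queries, each of size $m^3+1 = \poly(k)$, and runs in polynomial time. The $j$-th query genuinely depends on the previous answers, which is exactly why we should not expect this to be implementable non-adaptively.

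\emph{Negative direction.} I would build $V$ by diagonalizing against truth-table kernelizations. Fix a computable enumeration of oracle TMs in which each machine occurs infinitely often. At stage $i$, choose $m$ large enough that membership in $V$ is still undetermined on every string as long as any used here, that the labels reserved for this stage are fresh, and that $n^i < 2^{m^2}$ (where $n = 2^m+1$). Run the $i$-th machine on $\bits{1}\bits{0}^{2^m}$ for $n^i$ steps. Since a truth-table kernelization is non-adaptive, its set $Q$ of oracle queries is fixed before any answers are supplied, and $\size{Q} \le n^i < 2^{m^2}$, while the tree has $2^{m^2}$ leaves; hence some leaf label $\ell$ satisfies $\bits{0}\ell \notin Q$. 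Now fix the $V$-membership of the $m^2$ internal nodes on the root-to-$\ell$ path so that the walk terminates at $\ell$, answer every other query in $Q$ consistently with $V$ (using `no', and committing, for strings still undetermined — including a few more core strings, if needed, to pin down any queries of the form $\bits{1}y$), finish the simulation, and finally put $\ell$ into $V$ if and only if the machine rejects. Because $\bits{0}\ell \notin Q$, the machine's output does not depend on whether $\ell \in V$, so its decision on $\bits{1}\bits{0}^{2^m}$ is wrong. Any genuine polynomial truth-table kernelization makes at most $q(n)$ queries for a fixed polynomial $q$ and halts in at most $n^c$ steps for a fixed $c$; it is dealt with at a stage $i \ge c$ at which $m$ is afterwards chosen large enough to violate the count, so it cannot decide $X(V)$.

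\emph{Main obstacle.} The delicate part is the usual diagonalization bookkeeping, which here has two extra wrinkles. First, one must check that committing $V$ along the chosen path and on the previously-undetermined strings of $Q$ does not change the intended answer on $\bits{1}\bits{0}^{2^m}$ — it does not, since that answer depends only on the path nodes and on $\ell$ — and does not obstruct later stages, which is arranged by taking $m$ huge and letting the label length $m^3$ grow with the stage. Second, one must ensure that queries the machine makes to \emph{other} shell instances $\bits{1}y$ cause no trouble: the size restriction $\poly(k)$ forbids the machine from ever querying a string as long as its own input (as $\poly(m) \ll 2^m$), so every such $y$ has $\length{y} < 2^m$ and its associated tree is strictly smaller and uses labels disjoint from the current one — hence those queries can be answered (after committing a bounded amount of extra data to $V$) with no side effects on $\ell$ or the path. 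The quantitative crux, and the reason adaptivity is decisive, is the gap $2^{\poly(k)} \gg \poly(n)$ when $k = \log n$: a depth-$m^2$ binary tree has more leaves than any truth-table kernelization can afford to query, whereas the adaptive kernelization needs to inspect only $\poly(m) = \poly(k)$ of them.
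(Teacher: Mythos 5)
Your proposal is correct and follows essentially the same route as the paper: the paper's shell condition iterates a map $s^V$ that prepends the answer bit to the previous query, which is exactly a walk down a depth-$(\log\length{x})^2$ binary tree whose node labels are the prefix strings $b_j\ldots b_1\bits{0}^{\log\length{x}}$, and the diagonalization likewise picks an unqueried leaf among the $2^{(\log\length{x})^2}$ candidates (more than the $\poly(n)$ queries a non-adaptive machine can afford) and steers the path to it. Your explicit tree with fresh labels and your handling of stray shell queries are cosmetic variations on the same argument.
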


\begin{proof}
  For any decidable set $V$ we can define the function: $s^V: \binary^\ast \to \binary^\ast$ by
  \begin{equation*}
    s^V(q) = \begin{cases}
      \bits{0}q	& \text{if $q \notin V$}, \\
      \bits{1}q	& \text{if $q \in V$}.
    \end{cases}
  \end{equation*}
  Also for a decidable set $V$, we define the following parameterized problem:
  \begin{equation*}
    X(V) = \{\bits{0}x \suchthat x \in V\} \cup \left\{\bits{1}x \suchthat[\middle] \log\length{x} \in \bbN\text{ and } \underbrace{(s^V \circ s^V \circ \cdots \circ s^V)}_{(\log\length{x})^2\text{ times}}(\bits{0}^{\log{\length{x}}}) \in V\right\},
  \end{equation*}
  where the parameterization is defined so that for all $x \in \binary^\ast$, $\kappa(\bits{0}x) = \length{x}$ and $\kappa(\bits{1}x) = \log\length{x}$.
  The problem $X(V)$ has a polynomial Turing kernelization regardless of the set $V$: On inputs of the form $\bits{0}x$, the machine queries the oracle with its input (whose size is linear in the parameter value), and outputs the answer.
  On inputs of the form $\bits{1}x$ the machine makes the following $(\log\length{x})^2$ queries: $\bits{0}^{\log\length{x}+1}$, $\bits{0}b_1 \bits{0}^{\log\length{x}}$, $\bits{0}b_2 b_1 \bits{0}^{\log\length{x}},\ldots,\bits{0}b_{(\log\length{x})^2}\ldots b_1 \bits{0}^{\log\length{x}}$, where $b_i$ is the outcome of the $i$-th query, for each $i \le (\log\length{x})^2$.
  The output is the answer of the last oracle query.
  Since each of the queries in the second case is of size at most quadratic in $\kappa(1x) = \log\length{x}$, this procedure is a polynomial Turing kernelization.

  We now construct the set $V$ so that no polynomial truth-table kernelization can solve $X(V)$.
  Consider a variant of oracle TMs where the oracle can be queried for an arbitrary number of queries at once.
  Let $P_1, P_2, \ldots$ be a computable list of all such TMs in which each machine appears infinitely often.

  At each stage $i \in \bbN$, we set $n$ to be the smallest positive integer so that no oracle queries to $X(V)$ at any previous stage of the simulation depend on instances of $V$ of size at least $n$, and so that $n > i$ and $2^n > n^i$.
  At stage $i$ of the construction, we run $P_i$ on input $\bits{1}\bits{0}^{2^n}$ for $(2^n)^i$ steps (note that this is a polynomial of degree $i$ in $2^n+1$, the size of the input).
  In case the machine queries the oracle, let $S$ be the set of strings it queries.
  If $S$ includes strings of length at least $2^n$, we move on to the next stage.
  In particular, when no query of length $2^n + 1$ is made, $P_i$ is not making a query with prefix \bits{1} that is equivalent to the input.
  By the time bound, we have $\size{S} \leq 2^{ni}<2^{n^2}$, so there must be a string $y = b_{n^2}\ldots b_{2}b_{1}\bits{0}^{n}$, $b_j \in \binary$, such that  $\bits{0}y$ is not in $S$.
  The queries in $S$ are answered as follows: all queries also made at previous stages are answered so as to be consistent with previous answers; all queries of the form $\bits{0}b_j\ldots b_2b_1\bits{0}^n$, with $j \le n^2 - 1$, are answered with $b_{j+1}$; all other queries are answered with \bits{0} (`no').
  For all $j \le n^2 - 1$ such that $b_{j+1} = \bits{1}$, we place $b_j\ldots b_2b_1\bits{0}^n$ into $V$.
  After thus answering the queries in $S$, we resume the simulation of $P_i$ for the remainder of its allotted $2^{ni}$ steps and treat every subsequent invocation of the query instruction as a crash.
  Finally, we place $y$ into $V$ if and only if $P_i$ terminated within the time bound and rejected, making $\bits{1}\bits{0}^{2^n}$ a `yes'-instance if and only the $P_i$ rejects it.

  Assume now that there is a polynomial truth-table kernelization for $X(V)$.
  Such a procedure will eventually be targeted in the above construction.
  Indeed, a problem has a truth-table kernelization precisely when it is decided by a machine that runs in polynomial time and can make all its queries at once.
  Let $i$ be such that $P_i$ is a polynomial truth-table kernelization for $X(V)$, running in time $p(\length{x})$ on any input of the form $\bits{1}x$, and non-adaptively making oracle queries of size at most $q(\log\length{x})$, where $p$ and $q$ are fixed polynomials.
  As this machine occurs infinitely often in the list $P_1, P_2, \ldots$, we may assume that $i$ and its corresponding $n$ are large enough for $P_i$ to terminate on input $\bits{1}\bits{0}^{2^n}$, because we have $p(2^n+1) < 2^{ni}$.
  Moreover, we may assume that $i$ and $n$ are large enough for $q(n) < n^i < 2^n$ to hold.
  As $P_i$ will not be able to query all strings of the form $\bits{0}y\bits{0}^{n}$ with $\length{y} = n^2$, it will, by our construction of $V$, incorrectly decide some instance of $X(V)$.
\end{proof}

Finally, we show that decidability in time $2^{\poly(k)}\poly(n)$ does not guarantee the existence polynomial Turing kernelizations for the same problem.
This strengthens a theorem of \citet{bodlaender2009problems}, who construct a problem with the above complexity but rule out only polynomial regular kernelizations.

\begin{theorem}
\label{thm:htop}
  For every time-constructible function $g(k) \in 2^{o(k)}$, there is a problem that is solvable in time $\bigO(2^k n)$ but admits no Turing kernelization of size $g(k)$.
In particular, there is a problem that is solvable in time $\bigO(2^k n)$ but admits no polynomial Turing kernelization.
\end{theorem}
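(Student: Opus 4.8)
The plan is to realise the ``core/shell'' template of this section in its simplest form, with no shell at all: the witnessing problem is just a hard language $\Phi$ handed to us by the Time Hierarchy Theorem, equipped with a parameterisation $\kappa$ custom-built from the given bound $g$. I would define, for every $z\in\binary^\ast$,
\[
  \kappa(z) \;=\; \min\!\bigl(\length{z},\; \hat g^{-1}(\lfloor\sqrt{\length{z}}\,\rfloor)\bigr),
\]
where $\hat g(k)=\max_{j\le k}g(j)$ is the monotone envelope of $g$ and $\hat g^{-1}(m)=\max\{k:\hat g(k)\le m\}$. This $\kappa$ has two opposing virtues. First, $g(\kappa(z))\le\sqrt{\length{z}}$ for every $z$, so a Turing kernelization of size $g$ run on $z$ may only query instances of length at most $\sqrt{\length{z}}$. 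Second, $g\in 2^{o(k)}$ means $\log\hat g(k)=o(k)$, which forces $\hat g^{-1}(m)=\omega(m)$ and hence $\kappa(z)=\omega(\log\length{z})$; consequently $2^{\kappa(z)}\length{z}$ is superpolynomial in $\length{z}$, and --- since $g$ is time-constructible, $\kappa(z)$ is computable in $\poly(\length{z})$ time --- it is a time-constructible function of $\length{z}$. The Time Hierarchy Theorem (which here does the diagonalising for us) thus yields a language $\Phi\notin\cl{P}$ that is decidable in time $2^{\kappa(z)}\length{z}$. I take $X=\Phi$ with this $\kappa$; it is decidable in time $\bigO(2^{\kappa(z)}\length{z})=\bigO(2^k n)$, as required.

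To exclude a Turing kernelization of size $g$, I would suppose one exists, call it $M$, and \emph{unfold} it into a deterministic procedure for $\Phi$: on input $z$, simulate $M^X$ on $z$; whenever the simulation of $M$ on an instance $w$ issues an oracle query $z'$ --- necessarily with $\length{z'}\le g(\kappa(w))\le\sqrt{\length{w}}$ --- obtain the answer by recursing on $z'$, stopping and deciding $\Phi$ from a fixed finite table once an instance has length below some constant. The crux is that this recursion is cheap. Along any branch the lengths satisfy $\length{z_d}\le\length{z}^{2^{-d}}$, so the recursion has depth only $\bigO(\log\log\length{z})$; and if $M$ runs in time $n^{c}$, an instance $w$ at depth $d$ spawns at most $\length{w}^{c}\le\length{z}^{c\,2^{-d}}$ children, so the number of nodes at each depth is at most $\length{z}^{c\sum_{i<d}2^{-i}}\le\length{z}^{2c}$ and the total node count is $\length{z}^{\bigO(c)}$. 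As each node costs $\poly(\length{z})$ to process, the procedure runs in polynomial time, so $\Phi\in\cl{P}$, a contradiction. Hence no such $M$; and applying the theorem with, say, $g(k)=2^{\sqrt{k}}$ --- which any polynomial query bound eventually obeys --- yields a problem in $\bigO(2^k n)$ with no polynomial Turing kernelization.

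The step I expect to be the real obstacle is exactly the calibration of $\kappa$ that makes these two demands compatible. The recursion must contract instances by a full square root at each level for the branching to telescope to a polynomial: had oracle queries been allowed to be merely a constant power smaller than the current instance, the recursion tree would already be of quasi-polynomial size and the collapse to $\cl{P}$ would fail. Yet $\kappa$ must simultaneously stay superlogarithmic so that $\Phi$ can be superpolynomially hard and a contradiction with $\cl{P}$ is available at all. That both can be arranged at once is precisely what $g\in 2^{o(k)}$ provides, via $\hat g^{-1}(m)=\omega(m)$. The remaining verifications are bookkeeping: that $\kappa(z)\le\length{z}$ and is $\poly(\length{z})$-time computable (the outer minimum, plus evaluating $\hat g$ incrementally only up to the threshold $\lfloor\sqrt{\length{z}}\,\rfloor$), that $2^{\kappa(z)}\length{z}$ is time-constructible so the Time Hierarchy Theorem applies, and that every oracle query of $M$ encountered during the unfolding is indeed a string to which the procedure can recurse (trivially so, since $X=\Phi$ has no ill-formed instances).
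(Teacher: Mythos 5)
Your proof is correct and reaches the stated conclusion, but the key step is genuinely different from the paper's. Both arguments let the Time Hierarchy Theorem do the diagonalization on a superpolynomial, time-constructible bound $t(n)=2^{\kappa(n)}n$ with $\kappa(n)=\omega(\log n)$, and both then ``unfold'' a hypothetical Turing kernelization into an oracle-free algorithm that is too fast for the lower bound. The difference is in the unfolding. The paper unfolds only \emph{one} level: it calibrates $\kappa$ so that $\kappa(g(k))=o(k)$, answers each oracle query of size at most $g(k)$ by running the $\bigO(2^k n)$ decision algorithm on it directly, obtains a $2^{o(k)}=2^{o(\kappa(n))}$-time algorithm, and contradicts the sharp lower bound $L\notin\mathbf{DTIME}(o(t(n)/\log t(n)))$. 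You instead calibrate $\kappa$ so that $g(\kappa(z))\le\sqrt{\length{z}}$ and unfold \emph{recursively}; the square-root contraction keeps the recursion tree at depth $\bigO(\log\log n)$ with $n^{\bigO(1)}$ nodes, yielding a polynomial-time algorithm and contradicting merely $\Phi\notin\cl{P}$. Your route needs only the crude separation from $\cl{P}$ but pays with the recursion-tree analysis; the paper's route avoids recursion entirely but leans on the tight $t/\log t$ form of the hierarchy theorem. Either calibration of $\kappa$ is available for any $g\in 2^{o(k)}$, so both proofs go through.

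One local slip: from $\log\hat g(k)=o(k)$ you conclude $\hat g^{-1}(m)=\omega(m)$, which is false in general (for $g(k)=2^{\sqrt k}$ one gets $\hat g^{-1}(m)=(\log m)^2=o(m)$). What actually follows, and what you need, is $\hat g^{-1}(m)=\omega(\log m)$; this still gives $\kappa(z)=\omega(\log\length{z})$ and hence the superpolynomiality of $2^{\kappa(n)}n$, so the argument is unaffected. The remaining points you defer to bookkeeping (truncating the evaluation of $g(j)$ after $\bigO(\sqrt{n})$ steps so that $\kappa$ is polynomial-time computable, the finite table at the bottom of the recursion so that instances of constant length do not loop) are indeed routine.
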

\begin{proof}
  Let $g(k)$ be a time-constructible function in $2^{o(k)}$.
  Without loss of generality, we may assume that $g(k)$ is also in $\Omega\left(2^{(\log k)^2}\right)$.
  Let $\kappa: \bbN \to \bbN$ be a time-constructible function such that we have $\kappa(n) \in \omega(\log n)\cap o(n)$ as well as $\kappa(g(k)) \in o(k)$ (for example, $\kappa(n) = \log n \log\left(\frac{g^{-1}(n)}{\log n}\right)$ is suitable).
  Let $t(n) = 2^{\kappa(n)}n$ and let $L$ be a language in $\mathbf{DTIME}(t(n))\setminus \mathbf{DTIME}(o(t(n)/\log(t(n)))$.
  Such a language exists by the Time Hierarchy Theorem.
  Assigning each instance $x$ of $L$ the parameter value $k = \kappa(\length{x})$, we find that $L$ can be solved in time $\bigO(2^k n)$.

  Furthermore, we have
  \begin{equation*}
    \frac{t(n)}{\log t(n)} = \frac{2^{\kappa(n)}n}{\kappa(n)+\log n} \in \Omega\left(2^{\kappa(n)}\right),
  \end{equation*}
  so we may conclude $2^{o(\kappa(n))}\subseteq o(t(n)/\log(t(n))$.

  Assume now that for some polynomial $p$, there exists a Turing kernelization for $L$ that runs in time $p(n)$ and queries the oracle with instances of size bounded by $g(k)$, where we set $k = \kappa(n)$.
  We show that such a Turing kernelization can be used to solve $L$ in time $o(t(n)/\log(t(n))$, contradicting the choice of the language.
  Our new algorithm will solve any instance $x$ with parameter value $k = \kappa(\length{x})$ by running the Turing kernelization on it, except that the instances for which the oracle is supposed to be queried are solved directly using the $\bigO(2^{\kappa(n)} n)$-time algorithm whose existence is guaranteed by the choice of $L$.
  The total running time of this new algorithm is then upper-bounded by:
  \begin{equation*}
    p(n) + p(n)2^{\kappa\left(g(k)\right)}g(k) = 2^{o(k)} = 2^{o(\kappa(n))},
  \end{equation*}
  which contradicts the lower bound on the deterministic time complexity of $L$.
\end{proof}

\section{Lower Bounds}
\label{sec:lowerbounds}
An immediate consequence of the separations arrived at in the previous section is that not all fixed-parameter tractable problems have polynomial kernelizations.
However, for any particular parameterized problem the (non-)existence of a polynomial kernelization may not be easy to establish.
The most fruitful program for deriving superpolynomial lower bounds on the size of regular kernelizations was started by \citet{bodlaender2009problems}.
While a straightforward application of their technique to Turing kernelizations is not possible, an extension to the psize level in our hierarchy, Figure~\ref{fig:hierarchy}, is feasible.

In order to keep our presentation focussed, we shall include only a limited exposition of the lower bound technique.
For a more complete overview, refer to \citep{downey2016fundamentals,kratsch2014recent}, or turn to \citep{bodlaender2014kernelization} for an in-depth treatment.
Central to the lower bounds engine are two similar looking classifications of instance aggregation.
The first of these does not involve a parameterization.
\begin{definition}
  A \emph{weak \algorithmicand-distillation} (\emph{weak \algorithmicor-distillation}) of a set $X$ into a set $Y$ is an algorithm that
  \begin{compactitem}
  \item receives as input a finite sequence of strings $x_1, x_2, \ldots, x_t$,
  \item uses time polynomial in $\sum_{i=1}^t \length{x_i}$,
  \item outputs a string $y$ such that
    \begin{compactitem}
    \item we have $y \in Y$ if and only if for all (any) $i$ we have $x_i \in X$,
    \item $\length{y}$ is bounded by a polynomial in $\max_{1 \le i \le t} \length{x_i}$.
    \end{compactitem}
  \end{compactitem}
\end{definition}

Note how the size of the output of a distillation is bounded by a polynomial in the \emph{maximum} size of its inputs and not by the sum of the input sizes.
Originally, distillations where considered where the target set $Y$ was equal to $X$, hence the \emph{weak} designator in this more general definition.
The parameterized counterpart to distillations is, as we shall soon see, more lenient than the non-parameterized one.
\begin{definition}
  An \emph{\algorithmicand-compositional} (\emph{\algorithmicor-compositional}) parameterized problem $(X, \kappa)$ is one for which there is an algorithm that
  \begin{compactitem}
  \item receives as input a finite sequence of strings $x_1, x_2, \ldots, x_t$  sharing a parameter value $k = \kappa(x_1) = \kappa(x_2) = \ldots = \kappa(x_t)$,
  \item uses time polynomial in $\sum_{i=1}^t \length{x_i}$,
  \item outputs a string $y$ such that
    \begin{compactitem}
    \item we have $y \in X$ if and only if for all (any) $i$ we have $x_i \in X$,
    \item $\kappa(y)$ is bounded by a polynomial in $k$.
    \end{compactitem}
  \end{compactitem}
\end{definition}

Here, a bound is placed on the \emph{parameter value} of the output of the algorithm, instead of on the \emph{length} of the output.
Additionally, this bound is a function of the \emph{unique} parameter value shared by all input strings.
Conceptually, a bound of this kind makes sense as parameter values serve as a proxy of the computational hardness of instances.
Thus, a parameterized problem is compositional, when instances can be combined efficiently, without an increase in computational hardness.

Generalizing the results of \citet{bodlaender2009problems,bodlaender2014kernelization}, we find that not just regular polynomial kernelizations, but also psize kernelizations tie the two ways of aggregating instances together.
For our proof to work, two aspects of the definition of psize kernelizations on page~\pageref{def:psize} that were not made explicit are crucial.
Firstly, because a psize kernelization is a \emph{polynomial} truth-table kernelization, the size of the queries can be bounded by a polynomial of the parameter value.
Secondly, it is important to note that the circuits involved must be uniformly computable from the input instances.

\begin{theorem}
  If $(X, \kappa)$ is an \algorithmicand-compositional (\algorithmicor-compositional) parameterized problem that has a psize kernelization, then $X$ has a weak \algorithmicand-distillation (weak \algorithmicor-distillation).
\end{theorem}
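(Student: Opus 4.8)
The plan is to follow the classical route of Bodlaender et al.\ from composition plus polynomial kernelization to a distillation, but to account for the fact that a psize kernelization does not return a single small instance: instead it returns a list of small query instances together with a small circuit, and it is this data that the distillation will package.

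First I would cut the number of instances down to something polynomial in the maximum input length. Let $x_1,\dots,x_t$ be the input of the hoped-for distillation and write $n = \max_i \length{x_i}$. Partitioning $x_1,\dots,x_t$ by parameter value and using $\kappa(x_i) \le \length{x_i} \le n$, there are at most $n+1$ classes. Running the \algorithmicand-composition (resp.\ \algorithmicor-composition) on each class of size at least two, and leaving singleton classes as they are, produces instances $y_1,\dots,y_m$ with $m \le n+1$, each with $\kappa(y_j) \le p(n)$ for a fixed polynomial $p$, such that $y_j \in X$ if and only if all (resp.\ some) of the $x_i$ in class $j$ lie in $X$. Consequently $\bigwedge_i (x_i\in X) \Leftrightarrow \bigwedge_j (y_j\in X)$ (and the analogous equivalence with $\bigvee$). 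This step runs in time polynomial in $\sum_i \length{x_i}$.

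Next I would extract, for each $y_j$, the query list and circuit of the psize kernelization. Since it is a polynomial truth-table kernelization that makes at most $\poly(k)$ queries, and since the circuit is uniformly computable from the input, in time polynomial in $\length{y_j}$ one obtains query instances $q_{j,1},\dots,q_{j,s_j}$ with $s_j \le \poly(\kappa(y_j)) \le \poly(p(n))$ and each $\length{q_{j,l}} \le \poly(\kappa(y_j)) \le \poly(p(n))$, together with a circuit $C_j$ of size $\le \poly(p(n))$ for which $y_j\in X$ if and only if $C_j\big(\mathbf 1[q_{j,1}\in X],\dots,\mathbf 1[q_{j,s_j}\in X]\big)=1$. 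All of these quantities are polynomial in $n$; in particular the total number of query strings is at most $(n+1)\cdot\poly(p(n))$, each of length $\poly(n)$. The running time stays polynomial in $\sum_i\length{x_i}$.

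Finally, the distillation outputs a standard encoding $y = \langle (C_j)_{j\le m},\,(q_{j,l})_{j\le m,\ l\le s_j}\rangle$, whose length is polynomial in $n$ by the above. I take $Y$ to be the (fixed) set of strings of this shape such that $C_j\big(\mathbf 1[q_{j,1}\in X],\dots\big)=1$ holds for every $j$ (for the \algorithmicor\ case, for some $j$); a \emph{weak} distillation is exactly what lets us use such a $Y$ rather than $X$ itself. Correctness of the psize kernelization and of the composition then give $y\in Y \Leftrightarrow (\forall j)\,y_j\in X \Leftrightarrow (\forall i)\,x_i\in X$ (resp.\ the $\bigvee$ version), and $\length{y}$ is polynomial in $\max_i\length{x_i}$, so $y$ witnesses the weak \algorithmicand-distillation (resp.\ weak \algorithmicor-distillation). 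The point that must be handled with care is that both aspects of psize kernelizations emphasized just before the theorem are genuinely needed: the $\poly(k)$ bound on the \emph{number} of queries is what keeps the aggregate of all query strings polynomial in $n$ (a general truth-table kernelization with $\poly(n)$ queries could make this list as large as $t$), and the uniformity of the circuits is what makes the extraction step polynomial-time. The remaining work — the degenerate cases $t=0$ and singleton classes, monotonicity of the polynomials, and checking that the running times compose to a polynomial in $\sum_i\length{x_i}$ — is routine bookkeeping.
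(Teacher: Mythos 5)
Your proposal is correct and follows essentially the same route as the paper: group by parameter value, compose, apply the psize kernelization to each composed instance, and package the resulting circuits and query strings into a single output for a circuit-evaluation target set. The only cosmetic difference is that the paper amalgamates the per-group circuits into one circuit $\phi$ and distills into a single set $C(X)$, whereas you keep the circuits separate and fold the \algorithmicand/\algorithmicor\ into the definition of $Y$; both are legitimate for a \emph{weak} distillation.
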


\begin{proof}
  Given a set $X$, consider the following set based on circuits and inputs derived from membership in $X$,
  \begin{equation*}
    C(X) = \{\langle\phi, (x_1, x_2, \ldots, x_t)\rangle \suchthat \text{$\phi$ is a circuit with $t$ inputs, accepting $(x_1 \in X, \ldots, x_t \in X)$}\}.
  \end{equation*}
  Note that a pairing of the specification of a circuit $\phi$ and $t$ strings $(x_1, x_2, \ldots, x_t)$ can be done so that $\length{\langle\phi, (x_1, x_2, \ldots, x_t)\rangle}$ is bounded by a polynomial in $\length{\phi} + \length{x_1} + \length{x_2} + \ldots + \length{x_t}$.

  We sketch the proceedings of a distillation that is given $x_1, x_2, \ldots, x_t$ as input.
  This procedure is adapted from \citep{bodlaender2009problems}.

  First, the inputs are grouped by their parameter value $k_i = \kappa(x_i)$ and the composition algorithm is applied to each group, obtaining $(y_1, k'_1), (y_2, k'_2), \ldots, (y_s, k'_s)$.
  Taking $k_\mathrm{max} = \max_{1 \le i \le t} k_i$, we have $s \le k_\mathrm{max}$ and, for some polynomial $p$, all $k'_i$ are bounded by $p(k_\mathrm{max})$.

  Next, the psize kernelization is applied to each $(y_i, k'_i)$, obtaining $s$ polynomial sized circuits and $s$ sequences of strings to query in order to get the inputs of the circuits.
  These circuits and strings can be amalgamated (dependent on the type of composition) into a single circuit $\phi$ and sequence of strings $(z_1, z_2, \ldots, z_r)$.

  We claim that the mapping of $(x_1, x_2, \ldots, x_t)$ to $\langle\phi, (z_1, z_2, \ldots, z_r)\rangle$ constitutes a weak distillation of $X$ into $C(X)$.
  Both $s$ and $k_\mathrm{max}$ are bounded by $\max_{1 \le i \le t} \length{x_i}$, since, for all $i$, we have $k_i \le \length{x_i}$.
  Therefore, the proposed weak distillation procedure produces an output of which the size is bounded by a polynomial in $\max_{1 \le i \le t} \length{x_i}$ and its running time is indeed polynomial in $\sum_{i=1}^t \length{x_i}$.
  Moreover, by definition of a psize kernelization the required preservation of membership is satisfied, hence the procedure is truly a weak distillation of $X$ into $C(X)$.
\end{proof}

Assuming we have $\cl{NP} \not\subseteq \cladv{coNP}{poly}$, it has been shown that \cl{NP}-hard problems admit neither weak \algorithmicor-distillations \citep{fortnow2011infeasibility}, nor weak \algorithmicand-distillations \citep{drucker2015new}.
Thus we can further our generalization of the results of \citet{bodlaender2014kernelization}.
\begin{corollary}
  If $(X, \kappa)$ is an \algorithmicand-compositional (\algorithmicor-compositional) parameterized problem and $X$ is \cl{NP}-hard, then $(X, \kappa)$ does not have a psize kernelization unless $\cl{NP} \subseteq \cladv{coNP}{poly}$.
\end{corollary}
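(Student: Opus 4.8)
The plan is to obtain the corollary as an immediate consequence of the theorem just proved, combined by contraposition with the cited infeasibility results for distillations. Suppose that $(X, \kappa)$ is \algorithmicand-compositional (the \algorithmicor-compositional case being entirely symmetric), that $X$ is \cl{NP}-hard, and, aiming for a contradiction with the stated conclusion, that $(X, \kappa)$ admits a psize kernelization while at the same time $\cl{NP} \not\subseteq \cladv{coNP}{poly}$.

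First I would invoke the preceding theorem: since $(X, \kappa)$ is \algorithmicand-compositional and has a psize kernelization, $X$ has a weak \algorithmicand-distillation — concretely, the distillation of $X$ into the circuit-evaluation set $C(X)$ assembled in the proof of that theorem. Next I would apply the result of \citet{drucker2015new}, which, under the assumption $\cl{NP} \not\subseteq \cladv{coNP}{poly}$, rules out weak \algorithmicand-distillations for every \cl{NP}-hard set. As $X$ is \cl{NP}-hard, this directly contradicts the weak \algorithmicand-distillation produced in the previous step. Hence a psize kernelization for $(X, \kappa)$ cannot coexist with $\cl{NP} \not\subseteq \cladv{coNP}{poly}$, which is exactly the claim. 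For the \algorithmicor-compositional case, one uses the \algorithmicor-branch of the theorem together with \citet{fortnow2011infeasibility} in place of \citet{drucker2015new}.

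There is no real obstacle here: all the substance is already carried by the theorem — which handles the composition, the amalgamation of the $\poly(k)$-sized circuits, and the size and running-time bookkeeping — and by the black-box distillation lower bounds. The single point worth flagging is that the distillation delivered by the theorem targets the auxiliary set $C(X)$ rather than $X$ itself; this is precisely why the \emph{weak} notion of distillation is the one in play, and since the cited results are stated in exactly this weak form, the composition of the two facts goes through with nothing further to check.
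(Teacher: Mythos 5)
Your proposal is correct and matches the paper's own (essentially one-line) derivation: the corollary is obtained by combining the preceding theorem with the results of \citet{drucker2015new} for weak \algorithmicand-distillations and \citet{fortnow2011infeasibility} for weak \algorithmicor-distillations, exactly as you describe. Your remark that the distillation targets $C(X)$ rather than $X$, and that this is why the \emph{weak} form of the lower bounds is needed, is also the right point to flag.
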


Accordingly, our hierarchy of polynomial kernels is not merely synthetic and the place of many natural problems in the hierarchy is lower bounded.
In light of the more general setting of \citet{bodlaender2014kernelization}, we remark that a generalization of our results to cross-composition (generalizing compositionality) and psize compression (generalizing psize kernelization) is immediate.

\section{Classical Connections}
Algorithms for fixed-parameter tractable problems are not easily diagonalized against.
Such algorithms have a running time of the form $f(\kappa(x))\length{x}^c$, where $f$ is a computable function and $c$ a constant.
The challenge in diagonalizing is caused by the absence of a computable sequence of computable functions such that every computable function is outgrown by a member of the sequence.
However, as witnessed by this document, diagonalization can be used to uncover structure \emph{inside} \cl{FPT}.
Key to this possibility is that a problem is fixed-parameter tractable precisely when it is kernelizable, and the running time bound for kernelizations does not include arbitrary computable functions.

While, to our knowledge, not done before in a parameterized context, separating many--one, truth-table, and Turing reductions is an old endeavour, dating back to \citet{ladner1975comparison}.
Indeed, kernelizations are in essence reductions, more specifically, they are \emph{autoreductions} in the spirit of \citet{trakhtenbrot1970autoreducibility}.
Since kernelizations come with a time bound, a Turing kernelization could more accurately be described as a \emph{bounded Turing} kernelization, or \emph{weak truth-table} kernelization \citep[see][Section~3.8]{soare2016turing}.
However, the adaptiveness of a Turing kernelization entails that the number of different queries it \emph{could} make (unaware of the answers of the Oracle) is much higher than that of a truth-table kernelization, given the same time bound.
In that sense, our separation based on adaptiveness, Theorem~\ref{thm:ht}, is also a separation based on the number of queries made.

An important feature of kernelizations is not covered by an interpretation of kernelizations as autoreductions.
Where the definition of an autoreduction excludes querying the input string, the definition of a kernelization imposes a stronger condition on the queries, namely a size bound as a function of the parameter value.
In this light, it may be worthwhile comparing kernelizations to a more restrictive type of autoreduction, the \emph{self-reduction} \citep[see][Section~4.5]{balcazar1995structural}.
Self-reducibility is defined in \citep{balcazar1995structural} as autoreducibility where all queries are shorter than the input.
However, many of the results around self-reducibility extend to more general orders than the ``shorter than''-order and the definition can be generalized \citep{ko1983self}.
While the size bound on the queries that is required of kernelizations does not fit the self-reducibility scheme perfectly, the similarities in the definitions urge the consideration of other forms of self-reducibility in a parameterized context.
In particular, reducibility with a decreasing parameter value may be of interest.

\bibliographystyle{plainnat}

\begin{thebibliography}{16}
\providecommand{\natexlab}[1]{#1}
\providecommand{\url}[1]{\texttt{#1}}
\expandafter\ifx\csname urlstyle\endcsname\relax
  \providecommand{\doi}[1]{doi: #1}\else
  \providecommand{\doi}{doi: \begingroup \urlstyle{rm}\Url}\fi

\bibitem[Balc{\'a}zar et~al.(1995)Balc{\'a}zar, D{\'\i}az, and
  Gabarr{\'o}]{balcazar1995structural}
Jos{\'e}~Luis Balc{\'a}zar, Josep D{\'\i}az, and Joaquim Gabarr{\'o}.
\newblock \emph{Structural Complexity {I}}.
\newblock Springer, 1995.

\bibitem[Bodlaender et~al.(2009)Bodlaender, Downey, Fellows, and
  Hermelin]{bodlaender2009problems}
Hans~L Bodlaender, Rodney~G Downey, Michael~R Fellows, and Danny Hermelin.
\newblock On problems without polynomial kernels.
\newblock \emph{Journal of Computer and System Sciences}, 75\penalty0
  (8):\penalty0 423--434, 2009.

\bibitem[Bodlaender et~al.(2014)Bodlaender, Jansen, and
  Kratsch]{bodlaender2014kernelization}
Hans~L Bodlaender, Bart~MP Jansen, and Stefan Kratsch.
\newblock Kernelization lower bounds by cross-composition.
\newblock \emph{SIAM Journal on Discrete Mathematics}, 28\penalty0
  (1):\penalty0 277--305, 2014.

\bibitem[Cygan et~al.(2016)Cygan, Fomin, Kowalik, Lokshtanov, Marx, Pilipczuk,
  Pilipczuk, and Saurabh]{cygan2016parameterized}
Marek Cygan, Fedor~V Fomin, Lukasz Kowalik, Daniel Lokshtanov, Daniel Marx,
  Marcin Pilipczuk, Mihal Pilipczuk, and Saket Saurabh.
\newblock \emph{Parameterized Algorithms}.
\newblock Springer, 2016.

\bibitem[Downey and Fellows(2016)]{downey2016fundamentals}
Rodney~G Downey and Michael~R Fellows.
\newblock \emph{Fundamentals of Parameterized Complexity}.
\newblock Springer, 2016.

\bibitem[Drucker(2015)]{drucker2015new}
Andrew Drucker.
\newblock New limits to classical and quantum instance compression.
\newblock \emph{SIAM Journal on Computing}, 44\penalty0 (5):\penalty0
  1443--1479, 2015.

\bibitem[Flum and Grohe(2006)]{flum2006parameterized}
J{\"o}rg Flum and Martin Grohe.
\newblock \emph{Parameterized Complexity Theory}.
\newblock Springer, 2006.

\bibitem[Fortnow and Santhanam(2011)]{fortnow2011infeasibility}
Lance Fortnow and Rahul Santhanam.
\newblock Infeasibility of instance compression and succinct {PCP}s for {NP}.
\newblock \emph{Journal of Computer and System Sciences}, 77\penalty0
  (1):\penalty0 91--106, 2011.

\bibitem[Jansen(2017)]{jansen2016turing}
Bart~MP Jansen.
\newblock Turing kernelization for finding long paths and cycles in restricted
  graph classes.
\newblock \emph{Journal of Computer and System Sciences}, 85:\penalty0 18--37,
  2017.

\bibitem[Jansen et~al.(2018)Jansen, Pilipczuk, and Wrochna]{jansen2018turing}
Bart~MP Jansen, Marcin Pilipczuk, and Marcin Wrochna.
\newblock Turing kernelization for finding long paths in graphs excluding a
  topological minor.
\newblock In \emph{12th International Symposium on Parameterized and Exact
  Computation (IPEC 2017)}, volume~89, pages 23:1--23:13. Schloss
  Dagstuhl--Leibniz Zentrum fuer Informatik, 2018.

\bibitem[Ko(1983)]{ko1983self}
Ker-I Ko.
\newblock On self-reducibility and weak {P}-selectivity.
\newblock \emph{Journal of Computer and System Sciences}, 26\penalty0
  (2):\penalty0 209--221, 1983.

\bibitem[Kratsch(2014)]{kratsch2014recent}
Stefan Kratsch.
\newblock Recent developments in kernelization: A survey.
\newblock \emph{Bulletin of EATCS}, 2\penalty0 (113), 2014.

\bibitem[Ladner et~al.(1975)Ladner, Lynch, and Selman]{ladner1975comparison}
Richard~E Ladner, Nancy~A Lynch, and Alan~L Selman.
\newblock A comparison of polynomial time reducibilities.
\newblock \emph{Theoretical Computer Science}, 1\penalty0 (2):\penalty0
  103--123, 1975.

\bibitem[Soare(2016)]{soare2016turing}
Robert~I Soare.
\newblock \emph{Turing Computability}.
\newblock Springer, 2016.

\bibitem[Thomass{\'e} et~al.(2017)Thomass{\'e}, Trotignon, and
  Vu{\v{s}}kovi{\'c}]{thomasse2017polynomial}
St{\'e}phan Thomass{\'e}, Nicolas Trotignon, and Kristina Vu{\v{s}}kovi{\'c}.
\newblock A polynomial {T}uring-kernel for weighted independent set in
  bull-free graphs.
\newblock \emph{Algorithmica}, 77\penalty0 (3):\penalty0 619--641, 2017.

\bibitem[Trakhtenbrot(1970)]{trakhtenbrot1970autoreducibility}
Boris~A Trakhtenbrot.
\newblock On autoreducibility.
\newblock \emph{Doklady Akademii Nauk SSSR}, 192\penalty0 (6):\penalty0
  1224--1227, 1970.

\end{thebibliography}

\end{document}